%% file: main.tex
\documentclass[10pt]{article}

\usepackage{authblk}                                                          
\usepackage{geometry}                                                       
\usepackage[utf8]{inputenc}                                                
\usepackage{hyperref}                                                        
\usepackage{url}                                                                
\usepackage{booktabs}                                                       
\usepackage{amsfonts}                                                       
\usepackage{nicefrac}                                                         
\usepackage{microtype}                                                      
\usepackage{wrapfig}
\usepackage{amsmath,amsthm,amssymb,mathtools,graphicx,enumitem,hyphenat,float}
\usepackage{bbm}
\usepackage{subcaption}
\usepackage{accents, hyperref}
\usepackage{import}
\usepackage{algorithm}
\usepackage{algpseudocode}
\usepackage[sort]{natbib}
\setcitestyle{numbers,open={[},close={]}}
\title{Formatting Instructions For NeurIPS 2020}
\graphicspath{}
\setcitestyle{citesep={,}}
\usepackage{appendix}

\newcommand{\R}{\mathbb{R}}

\newcommand{\PCIS}{\mathcal{S}}   
\newcommand{\Xset}{\mathcal{X}}
\newcommand{\Uset}{\mathcal{U}}
\newcommand{\Vdec}{\lambda}       
\newcommand{\sig}{\sigma}
\newcommand{\mumu}{\mu}
\newcommand{\norm}[1]{\left\lVert #1 \right\rVert}

\newcommand{\printfnsymbol}[1]{%
  \textsuperscript{\@fnsymbol{#1}}%
}

\input{header.tex}

\begin{document}

\title{\bf Safe Exploration for Nonlinear Processes Using Online Gaussian Process Learning}
\author[1]{S.\ Tonini}
\author[1]{S.\ Rastegarpour}
\author[1]{H. R.\ Feyzmahdavian}
\author[2]{N.\ Bastianello}
\author[2]{K. H.\ Johansson}
\affil[1]{ABB Corporate Research, V\"aster\aa s, Sweden} 
\affil[2]{KTH Royal Institute of Technology, Stockholm, Sweden}
\maketitle

%
%
\begin{abstract}                
This paper proposes a safe data-driven control framework for nonlinear systems with partially known dynamics. The method ensures stability and constraint satisfaction during online learning, assuming only a stabilizable linear approximation of the process is available. Unmodeled nonlinear dynamics are captured by a Gaussian process residual learned in real time. Safety is enforced through a probabilistic control-invariant set derived from Lyapunov theory, guaranteeing high-probability stability. A convex quadratic program computes control inputs that maximize information gain while respecting probabilistic safety constraints. The framework provides finite-sample safety guarantees and allows adaptive expansion of the invariant set as uncertainty decreases. Numerical results validate the approach, demonstrating safe and informative exploration under model uncertainty.
\end{abstract}

%
%
\section{Introduction}
Data-driven and learning-based control have emerged as powerful paradigms for improving performance in complex dynamical systems where first-principles models are incomplete or uncertain. Using data collected from operation, these approaches can adapt to unknown dynamics and disturbances, offering improved accuracy and robustness. However, in safety-critical applications such as autonomous robotics, aerospace systems, and process industries, the benefits of data-driven learning must be reconciled with strict guarantees of stability and constraint satisfaction. The central challenge is therefore to design learning-based controllers that ensure safe exploration, that is, the ability to learn system behavior online without ever violating safety or stability requirements.

Conventional machine learning methods such as reinforcement learning and Bayesian optimization popularize the idea of learning by interaction, yet they typically neglect hard safety constraints during early exploration \citep{hewing2020learning}. This has motivated extensive research on safe learning for control, in which machine learning models are integrated within control-theoretic frameworks that enforce stability and constraint invariance. A common strategy is to exploit a priori system knowledge, such as approximate dynamics or stability certificates, to limit risk during learning \citep{Berkenkamp2016,Koller2018SafeMPC,wang2025robust, rastegarpour2024enhancing, rastegarpour2025adaptive, pawlak2025hybrid}. While these approaches achieve promising results, many rely on idealized assumptions: either the full nonlinear dynamics is known, or uncertainty is captured through conservative Lipschitz over-approximations. Such assumptions limit their practicality for real systems, where a local linearization is usually available and the true residual dynamics must be learned online.

The motivation for this work arises from \cite{prajapat2025safe}, which demonstrated guaranteed safe exploration for nonlinear systems with fully known dynamics. Their framework synthesizes exploratory inputs that maintain invariance of precomputed safety sets, thereby ensuring constraint satisfaction at all times. Although theoretically attractive, this approach requires exact knowledge of the process dynamics, an assumption rarely satisfied in practice. Most real-world systems are instead characterized by a nominal linear model, around which learning must occur safely to identify unmodeled nonlinear effects.

This paper addresses this limitation by proposing a safe data-driven control framework that guarantees stability and constraint satisfaction when only a stabilizable linear approximation of the nonlinear process is known. The approach augments the nominal linear model with a Gaussian process (GP) residual that is learned online, capturing unmodeled nonlinear dynamics. Safety is ensured through a probabilistic control-invariant set (PCIS) derived from Lyapunov theory, within which the closed-loop system remains stable with high probability. The resulting controller achieves safe and informative exploration by balancing excitation for learning with strict safety guaranties.

The main contributions of this paper are summarized as follows:
\begin{itemize}
\item A nominal-residual decomposition framework that combines a known linear model with an online GP residual to enable safe learning of unknown nonlinearities without requiring full model knowledge.
\item A Lyapunov-based PCIS construction that incorporates GP uncertainty to provide finite-sample probabilistic guarantees of stability and constraint satisfaction.
\item An optimization-based controller that solves a convex quadratic program (QP) at each step to maximize information gain while maintaining safety almost surely.
\end{itemize}

Compared with existing safe learning methods, the proposed framework achieves a favorable balance between tractability and conservatism. Unlike SafeMPC formulations \citep{Koller2018SafeMPC}, which propagate uncertainty over long horizons and may become overly conservative, the present approach enforces a one-step Lyapunov decrease with GP confidence bounds, yielding a convex QP without multi-step uncertainty blow-up. In contrast to guaranteed safe exploration approaches with known dynamics \citep{Prajapat2024}, the proposed method requires only a stabilizable linear model, while learning the nonlinear residual online with finite-sample confidence. Furthermore, compared to probabilistic invariance formulations \citep{Gao2021PCIS,Griffioen2023}, our method provides a practical control Lyapunov function-based that integrates directly into a real-time controller, with adaptive safety set expansion as the GP uncertainty diminishes.

In summary, this work develops a unified and implementable framework for safe exploration in partially known nonlinear systems, integrating GP-based uncertainty quantification with Lyapunov stability theory and convex optimization. The approach provides theoretical safety guarantees and is validated through simulation studies that demonstrate effective and stable exploration under model uncertainty.

The remainder of this paper is organized as follows. Section~\ref{sec:background} introduces the necessary preliminaries. Section~\ref{sec:problem} formulates the safe exploration problem. Section~\ref{sec:method} presents the proposed GP–PCI construction and controller synthesis. Section~\ref{sec:results} provides numerical results, and Section~\ref{sec:conclusion} concludes the paper.

\section{Background}\label{sec:background}
We review the three ingredients underlying our approach:
(i) Gaussian processes (GPs) to model unknown residual dynamics with calibrated uncertainty,
(ii) probabilistic control–invariant sets (PCIS) constructed via Lyapunov conditions, and
(iii) an optimisation–based controller that enforces safety while promoting informative excitation.

\subsection{Gaussian processes for system identification}
Gaussian processes provide a Bayesian, non-parametric prior over functions \citep{RasmussenWilliams2006GPML}.  Given input-output data $\mathcal D=\{(x_i,y_i)\}_{i=1}^N$, a GP posterior yields closed-form expressions for the predictive mean $\mu(\cdot)$ and standard deviation $\sigma(\cdot)$ at any query point.  Crucially, the uncertainty band $\mu\pm\beta\,\sigma$ contains the true function with high probability $1-\delta$ when the kernel is chosen appropriately and $\beta$ is tuned via information-theoretic regret bounds \citep{Srinivas2010GP-UCB}.  Unlike parametric regressors, GPs gracefully capture non-linearities with only a handful of data, an essential property for online learning in safety-critical systems where extensive exploration is infeasible.
A GP is a collection of random variables such that any finite subset is jointly Gaussian. With mean \(m(x)\) and kernel \(k(x,x')\),
\[
f(x)\sim \mathcal{GP}\!\bigl(m(x),\,k(x,x')\bigr).
\]
Given training data \(\mathcal D=\{(x_i,y_i)\}_{i=1}^n\) and noisy observations \(y=f(x)+\varepsilon\), \(\varepsilon\sim\mathcal N(0,\sigma_n^2)\), the joint prior over \(y\) and the test value \(f_\ast=f(x_\ast)\) are
\[
\begin{bmatrix} y \\ f_\ast \end{bmatrix}
\sim\mathcal N\!\left(
\begin{bmatrix} m(X) \\ m(x_\ast) \end{bmatrix},
\begin{bmatrix}
K(X,X)+\sigma_n^2 I & K(X,x_\ast)\\
K(x_\ast,X) & K(x_\ast,x_\ast)
\end{bmatrix}
\right),
\]
where \(K(\cdot,\cdot)\) is the kernel Gram matrix. Conditioning yields the standard predictive equations
\begin{align}
\mu(x_\ast) &= m(x_\ast) + K(x_\ast,X)\bigl[K(X,X)+\sigma_n^2 I\bigr]^{-1}\!\bigl(y-m(X)\bigr),\label{eq:gp_mean}\\
\sigma^2(x_\ast) &= K(x_\ast,x_\ast) - K(x_\ast,X)\bigl[K(X,X)+\sigma_n^2 I\bigr]^{-1}\!K(X,X_\ast).\label{eq:gp_var}
\end{align}

The pair \((\mu,\sigma)\) provides both a point estimate and calibrated uncertainty. In our setting, the residual dynamics \(g(\cdot)\) are modeled by a GP and we enforce safety using a high-probability envelope \(|g(x)-\mu(x)|\le\beta_t\,\sigma(x)\) inside the Lyapunov-based predicate (Sec.~\ref{sec:method}).

The main limitations of GP is that naive training scales as \(\mathcal O(n^3)\) and prediction as \(\mathcal O(n^2)\) due to Cholesky solves on \(K(X,X)\), which motivates sparse / inducing point approximations for online operation (see, e.g., \citep{RasmussenWilliams2006GPML}).

\subsection{Control-invariant sets and Lyapunov functions}
A control-invariant set $\mathcal C$ is a subset of the state space such that for every $x\in\mathcal C$ there exists an admissible control $u$ keeping the closed-loop trajectory inside $\mathcal C$ for all future time.  When a continuously differentiable Lyapunov function $V(x)$ satisfies $\dot V\le0$ in $\mathcal C$, the sublevel set $\{x\mid V(x)\le\alpha\}$ is control-invariant.  Classical results assume perfectly known dynamics and, therefore, deterministic invariance conditions.  In the presence of modeling errors, deterministic constraints become overly conservative.  Recent research blends Lyapunov theory with stochastic certificates, yielding probabilistic control-invariant sets (PCI) that tolerate bounded uncertainty while still guaranteeing safety with high probability \cite{wang2018safe,Gao2021PCIS}.

Integrating GP uncertainty bounds into Lyapunov conditions enables data-driven refinement of the safe set: as the model improves, the GP variance $\sigma$ shrinks, and the PCI expands, unlocking progressively richer exploration without sacrificing guarantees.

\subsection{Confidence schedules and calibration}\label{subsec:beta}
We use the finite-sample GP-UCB schedule (e.g., \cite{Srinivas2010GP-UCB})
\[
\beta_t \;=\; \sigma_n \sqrt{2\big(\gamma_{t-1} + 1 + \ln(1/\delta_t)\big)}\ +\ B,
\]
where $\gamma_{t-1}$ is the maximum information gain up to $t-1$, $B$ bounds the RKHS norm of $g$, and $\sigma_n^2$ is the observation noise variance. In practice we upper bound $\gamma_{t-1}$ conservatively for the chosen kernel.

We estimate a scalar $\gamma^\star\ge 1$ on a validation split such that the empirical coverage of $\mumu\pm 1.96\sqrt{\gamma^\star}\sig$ is at least $95\%$. We then use $\tilde\sig=\sqrt{\gamma^\star}\sig$ inside the optimization problem \eqref{eq:pcis_constraint}. This preserves the guarantee provided calibration uses held-out data and is recomputed only between runs.

We keep the QP convex by using a linearized variance-seeking term:
\[
\min_{u,s}\ (u-u_{lin})^\top R_s (u-u_{lin}) + \rho s - \alpha\, {w}^\top \sig_t(x),
\]
with $\alpha\ge 0$ and ${w}\in\R^n_{\ge 0}$ selecting directions of interest (e.g., basis aligned with largest posterior variance). Since $\sig_t$ is state dependent, we use the current $\sig_t(x_k)$ as a constant weight at step $k$, keeping the program quadratic. 

\section{Problem Framework}\label{sec:problem}

We consider a control-affine nonlinear system
\begin{equation}\label{eq:plant}
\dot x = f(x,u) = Ax + Bu + g(x),
\end{equation}
where \((A,B)\) is a stabilisable known linearisation around an operating point and \(g:\mathcal X\to\mathbb R^{n}\) is an unknown smooth residual capturing unmodelled dynamics. A stabilising linear feedback \(u_{\mathrm{lin}}(x)=-Kx\) with Lyapunov function \(V(x)=x^\top P x\) is available for the nominal pair \((A,B)\).
The controller runs with ZOH at period \(\Delta t>0\). Let \(x_k:=x(k\Delta t)\), \(u_k:=u(k\Delta t)\). Noisy measurements are
\begin{equation}
y_k = Cx_k + \nu_k,\qquad \nu_k\sim\mathcal N(0,R).
\end{equation}
When only \((y_k)\) are measured, a state estimate \(\hat x_k\) (observer or filtered GP prediction) is used instead of \(x_k\) in the safety predicate.
State and input constraints are known and encode safety and actuation limits:
\begin{align}\label{eq:box-constraints}
\mathcal X &= \{\,x \in \mathbb{R}^n \;:\; x_{\min}\le x \le x_{\max}\,\},\\[-2pt]
\mathcal U &= \{\,u \in \mathbb{R}^m \;:\; u_{\min}\le u \le u_{\max}\,\}. 
\end{align}

\subsection{Residual model and prior information}
The unknown residual \(g(\cdot)\) is modeled by a Gaussian Process with kernel \(k\). At time \(t\) the GP posterior provides mean \(\mu_t(\cdot)\) and std.\ \(\sigma_t(\cdot)\). We use a high-confidence envelope with scale \(\beta_t>0\):
\begin{equation}\label{eq:hp-envelope}
\Pr\big(\|g(x)-\mu_t(x)\|_2 \le \beta_t\,\sigma_t(x)\big)\ \ge\ 1-\delta,
\qquad \forall x\in\mathcal X,
\end{equation}
where \(\delta\in(0,1)\) is a user-chosen risk level (Section~\ref{subsec:risk}). Standard choices for \(\beta_t\) follow GP-UCB schedules based on information gain.

\subsection{Safety notion (probabilistic invariance)}
Safety is defined as recursive feasibility of state-input constraints with high probability:
\begin{equation}\label{eq:safety}
\Pr\!\Big(x_k\in\mathcal X,\ u_k\in\mathcal U,\ \forall k\ge 0\ \Big)\ \ge\ 1-\delta.
\end{equation}
We certify safety by constructing a probabilistic control-invariant set (PCIS) \(\mathcal S\subseteq\mathcal X\) such that \(x_0\in\mathcal S\) implies \(x_k\in\mathcal S\) for all \(k\) with probability \(1-\delta\).

\subsection{Exploration objective}
Let \(\mathcal D_t=\{(x_i,u_i,y_i)\}_{i=1}^{N_t}\) be the data collected up to time \(t\). The goal is to design a feedback policy \(\pi\) that
\begin{equation}\label{eq:objective}
\max_{\pi}\ \ \mathcal I(\mathcal D_T)\quad
\text{s.t.}\ \ \eqref{eq:plant},\ \eqref{eq:box-constraints},\ \eqref{eq:safety},
\end{equation}
where \(\mathcal I(\cdot)\) measures information (e.g., trace reduction of the GP posterior covariance, mutual information, or prediction RMSE on a validation grid). In practice we use a variance-seeking objective filtered by a safety layer (Section~\ref{sec:method}).

\subsection{Risk allocation and notation}\label{subsec:risk}
We adopt a single risk budget \(\delta\) for invariance; equivalently, one can distribute per-step or per-constraint budgets \(\{\delta_k\}\) s.t.\ \(\sum_k \delta_k\le \delta\) (union bound). Bold symbols denote vectors; \(\|\cdot\|\) is the Euclidean norm; \(\lambda>0\) denotes the desired Lyapunov decay rate used in the PCIS condition.

\section{Methodology}\label{sec:method}

Consider the control‑affine continuous‑time system
\begin{equation}\label{eq:nonlin}
  \dot x = f(x,u) = Ax + Bu + g(x), \quad x\in\mathcal X,\; u\in\mathcal U,
\end{equation}
where $(A,B)$ is a stabilisable linearisation, and $g:\mathcal X\!\to\R^n$ is unknown but \emph{smooth}.  Hard constraints $x\in\mathcal X$ and $u\in\mathcal U$ encode actuator limits and safety envelopes.  We assume a stabilising linear controller $u_{\mathrm{lin}}(x)$ keeps the origin exponentially stable for the linear model.

\textbf{Objective.} Design an exploration policy that (i) collects informative data to learn $g$ and (ii) never violates $\mathcal X$ nor $\mathcal U$ with probability at least $1-\delta$.

\subsection{GP residual model}
We model the residual $g$ as a zero‑mean GP,
\begin{equation}
  g \sim \mathcal{GP}\bigl(0,\,k(\cdot,\cdot)\bigr),
\end{equation}
updated online with data $\mathcal D_t=\{(x_i,g_i)\}_{i=1}^{N_t}$.  Standard GP inference yields a posterior mean $\mu_t(x)$ and standard deviation $\sigma_t(x)$. With high‑probability confidence parameter $\beta_t$ we have
\begin{equation}\label{eq:gp_conf}
  \Pr\bigl(\|g(x)-\mu_t(x)\|_2 \le \beta_t\sigma_t(x)\bigr) \ge 1-\delta.\!
\end{equation}

\subsection{Probabilistic control‑invariant set}
Let $V(x)=x^\top P x$ be the Lyapunov function of a stabilising LQR for $(A,B)$.  We define the \emph{GP‑aware} PCI
\begin{align}\label{eq:PCI}
  \ PCI &:= \Bigl\{x\in\mathcal X\,\big|\, \exists u\in\mathcal U:\ 
    \nabla V\!\cdot\!\bigl(Ax+Bu+\mu_t(x)\bigr) \\
    &\hphantom{:= \Bigl\{x\in\mathcal X\,}\quad+\beta_t\sigma_t(x)\|\nabla V\| + \lambda V(x) \le 0 \Bigr\},
\end{align}

which contracts under the worst‑case GP error.  By construction, trajectories starting in $PCI$ remain there with probability $1-\delta$.

\subsection{Safe excitation controller}
At each time step, we solve the following QP problem
\begin{equation}\label{eq:QP}
\begin{aligned}
\min_{u,\,s\ge 0}\;\; & \|u-u_{\mathrm{lin}}(x)\|_2^2 + \rho\, s \\
\text{s.t.}\;\; 
& \nabla V(x)^\top\!\big(Ax + Bu + \mu_t(x)\big) \\
&\quad +\, \beta_t\,\sigma_t(x)\,\|\nabla V(x)\|_2 \;+\; \lambda\,V(x) \;\le\; s,\\
& u_{\min} \le u \le u_{\max}.
\end{aligned}
\end{equation}

Slack~$s$ preserves feasibility, while the cost keeps $u$ close to $u_{\mathrm{lin}}$ whenever safety allows.
The scalar $\rho>0$ penalizes the constraint slack $s$ so that $s=0$ is preferred whenever the CLF inequality is satisfiable without relaxation. Intuitively, large $\rho$ makes any violation expensive and drives $s\!\to\!0$ except when strictly necessary (e.g., under large model uncertainty or tight bounds). The solution ensures $x_{t+1}\in PCI$ almost surely.

\subsection{Assumptions and guarantee}\label{subsec:guarantee}
We design and analyze in discrete time with ZOH.

\textbf{A1 (Nominal CLF).} There exist $P\!=\!P^\top\!\succ 0$, $\Vdec\!>\!0$, and a stabilizing nominal feedback $u_{lin}(x)\!=\!-Kx$ for $(A,B)$ such that 
\[
V(x)=x^\top P x,\quad V(x^+_{\text{nom}})-V(x)\le -\Vdec V(x)
\]
for the nominal map $x^+_{\text{nom}}=A_dx+B_du_{lin}(x)$ in a neighborhood of the origin.

\textbf{A2 (Residual model).} The true residual $g:\Xset\to\R^n$ satisfies the GP-UCB envelope componentwise on $\Xset$,
\[
|g_i(x)-\mumu_{t,i}(x)| \le \beta_t\, \sig_{t,i}(x),\quad i=1,\dots,n,
\]
with probability at least $1-\delta_t$ (Sec.~\ref{subsec:beta}). The posterior is updated online with data $(x_k,u_k)$.

\textbf{A3 (Discretization bound).} The continuous-time closed-loop with ZOH is discretized with step $\Delta t$; the local truncation error on $V$ is bounded as
\(
|V(x_{k+1})-V(x_k)-\Delta V_{\text{Euler}}(x_k,u_k)| \le \eta\,\Delta t^2
\)
for $x$ in the region of interest (standard for RK/linearization).

\textbf{A4 (State estimate).} If $x$ is not measured, $\hat x$ satisfies $\norm{x-\hat x}\le \epsilon_x$ with known bound (absorbed into margins).

\medskip

\textbf{QP safety constraint.} At time $k$, the controller solves
\begin{subequations}\label{eq:pcis_constraint}
\begin{align}
\min_{u,\,s\ge0}\;&(u-u_{\text{lin}}(x))^\top R_s (u-u_{\text{lin}}(x))+\rho s,\\
\text{s.t.}\;&V(Ax{+}Bu{+}\mu_t(x)){-}V(x)\notag\\
&\ +\sum_{i=1}^n \beta_t \sigma_{t,i}(x)\,\bigl|[\nabla V(x)]_i\bigr|
+\eta \Delta t^{2}\ \le\ -\Vdec V(x)+s,\\
&u\in\mathcal U.
\end{align}
\end{subequations}

The sum uses a componentwise envelope; equivalently use an induced-norm bound
$\beta_t \norm{\sig_t(x)}_\infty \norm{\nabla V(x)}_1$.

\begin{theorem}[Probabilistic invariance]\label{thm:pcis}
Let $\PCIS_t := \{x\in\Xset \mid \exists u\in\Uset \text{ satisfying } \eqref{eq:pcis_constraint} \text{ with } s=0 \}$. 
Under A1–A4, if $x_k\in\PCIS_t$ and the QP \eqref{eq:pcis_constraint} is feasible at $k$, then with probability at least $1-\delta_t$,
\[
x_{k+1}\in\PCIS_t\quad \text{and}\quad x_j\in\Xset,\ u_j\in\Uset\ \ \forall j\ge k.
\]
If a per-step risk budget $\delta_k$ is used with $\sum_{k=0}^{T-1}\delta_k\le \delta$, then
$\Pr(x_j\in \Xset, u_j\in\Uset,\ \forall j\le T) \ge 1-\delta$.
\end{theorem}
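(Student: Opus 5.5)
The argument is a one-step Lyapunov decrease that holds on the good event $E_t$ of A2, followed by induction and a union bound over steps. Throughout, $x$ is treated as measured; under A4, $x$ is replaced by $\hat x$ and the $\epsilon_x$-dependent terms are absorbed into $\eta\Delta t^2$.

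\textbf{Step 1 (one-step bound on $E_t$).} Fix $k$ with $x_k\in\PCIS_t$, and let $u_k$ be a feasible QP solution. On $E_t$, the bound $|g_i(x)-\mumu_{t,i}(x)|\le\beta_t\sig_{t,i}(x)$ holds for all $i$ and all $x\in\Xset$, and $\Pr(E_t)\ge 1-\delta_t$. Write the true one-step prediction as the nominal-plus-mean prediction plus the residual error $e:=g(x_k)-\mumu_t(x_k)$. I would bound the contribution of $e$ to the change in $V$ by the first-order term $\nabla V(x_k)^\top e$. Hölder's inequality componentwise gives
\[
\nabla V(x_k)^\top e\le\sum_i\beta_t\sig_{t,i}(x_k)\,|[\nabla V(x_k)]_i| .
\]
A3 controls the remaining gap, between the Euler/one-step model increment and the true sampled-data increment, by $\eta\Delta t^2$. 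Chaining these with the QP constraint at $s=0$ gives
\[
V(x_{k+1})-V(x_k)\le-\Vdec V(x_k).
\]

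\textbf{Step 2 (main obstacle).} The constraint evaluates $V$ at the mean prediction $Ax+Bu+\mumu_t(x)$, not at the true successor. Passing from $V(\text{mean prediction})$ to $V(\text{true successor})$ therefore produces a quadratic remainder $e^\top P e$ beyond the linear gradient term. I would try first to absorb this remainder into the $\eta\Delta t^2$ margin, since $e$ enters through a $\Delta t$-scaled update and the remainder is $O(\Delta t^2)$. Failing that, I would add the hypothesis that $\eta$ is chosen to dominate it on the region of interest.

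\textbf{Step 3 (invariance).} Since $V(x_{k+1})\le(1-\Vdec)V(x_k)$, $x_{k+1}$ lies in the sublevel set of $V$ through $x_k$. I also need the certifying set to be a $V$-sublevel set contained in $\Xset$ on which A1 holds. I expect to either read $\PCIS_t$ as (or intersect it with) the largest level set $\{V\le c\}\subseteq\Xset$ for which feasibility at $s=0$ holds, or to argue via the nominal feedback $u_{lin}$ and A1 that points of smaller $V$ admit $u$ with $s=0$. This gives $x_{k+1}\in\PCIS_t$; the box constraint in the QP gives $u_k\in\Uset$.

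\textbf{Step 4 (all future steps and union bound).} Iterate Steps 1--3. With a fixed posterior, a single event $E_t$ covers every step, so the claim for all $j\ge k$ holds with probability $1-\delta_t$. With online updates and per-step budgets $\delta_k$, let $E_k$ be the step-$k$ confidence event. On $\bigcap_{k<T}E_k$ the induction gives $x_j\in\Xset$ and $u_j\in\Uset$ for all $j\le T$. The union bound then yields
\[
\Pr\Bigl(\bigcup_k E_k^c\Bigr)\le\sum_{k=0}^{T-1}\delta_k\le\delta .
\]
This last step needs no independence between the events.
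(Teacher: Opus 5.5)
Your skeleton (one-step decrease on the A2 event, sublevel-set invariance, union bound) is the paper's own sketch. The two points you flag are exactly where that sketch asserts rather than argues, but your patches do not close them.

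\textbf{Step~2 can be repaired.} Absorbing $e^\top Pe$ into $\eta\Delta t^2$ is not licensed by A3, which only compares the true increment of $V$ with its Euler increment and says nothing about GP error. The remainder never appears, however, if you compare in the other direction. Read the constraint's prediction as the Euler step $\hat x^+:=x+\Delta t\,(Ax+Bu+\mu_t(x))$ of the continuous-time model. A3's increment $\Delta t\,\nabla V(x_k)^\top(Ax_k+Bu_k+g(x_k))$ then splits \emph{linearly} as $\nabla V(x_k)^\top(\hat x^+-x_k)+\Delta t\,\nabla V(x_k)^\top e$. Your H\"older step bounds the second term (for $\Delta t\le 1$), and convexity of $V$ gives $\nabla V(x_k)^\top(\hat x^+-x_k)\le V(\hat x^+)-V(x_k)$. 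Hence $V(x_{k+1})-V(x_k)\le V(\hat x^+)-V(x_k)+\sum_i\beta_t\sigma_{t,i}(x_k)|[\nabla V(x_k)]_i|+\eta\Delta t^2\le-\lambda V(x_k)+s_k$, with no quadratic remainder. This is consistent with the envelope term using $\nabla V$ at $x$ rather than at $\hat x^+$. Separately, $x_k\in\PCIS_t$ does not make the QP return $s_k=0$ unless $\rho$ exceeds the multiplier of the unrelaxed constraint, so ``the constraint at $s=0$'' is an extra assumption.

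\textbf{Step~3 is the genuine gap, and it cannot be closed for $\PCIS_t$ as defined.} With $s=0$ the constraint says $V(\hat x^+)+\sum_i\beta_t\sigma_{t,i}|[\nabla V]_i|+\eta\Delta t^2\le(1-\lambda)V(x)$. The left side is at least $\eta\Delta t^2$, so $\PCIS_t\subseteq\{x:(1-\lambda)V(x)\ge\eta\Delta t^2\}$. For $\eta>0$ this excludes a whole neighbourhood of the origin, where no $u$ at all works. The consequences are:
\begin{itemize}
\item No sublevel set around the origin has $s=0$ feasible throughout, and $\PCIS_t$ is not one, so your fallback through $u_{lin}$ and A1 is impossible.
\item The induction of Step~4 is self-contradictory: $x_j\in\PCIS_t$ with $s_j=0$ for all $j\ge k$ would give $V(x_j)\le(1-\lambda)^{j-k}V(x_k)\to 0$. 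The state must therefore leave $\PCIS_t$, after which $s_j>0$ and your decrease is lost. Note that the slack makes the QP always feasible, so ``feasible at $k$'' is not a real hypothesis.
\item $x_j\in\Xset$ would also need $\{V\le V(x_k)\}\subseteq\Xset$, which nothing provides.
\end{itemize}

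What is missing is an invariant set that tolerates slack. On the A2 event the chain above gives $V(x^+)\le(1-\lambda)V(x)+s^\star$. Let $\bar s(x)$ be the least slack making $u_{lin}(x)$ feasible; comparing the optimum with the feasible pair $(u_{lin}(x),\bar s(x))$ gives $s^\star\le\bar s(x)$. Suppose $\{V\le c\}\subseteq\Xset$, $u_{lin}(x)\in\Uset$ on it, and $(1-\lambda)V(x)+\bar s(x)\le c$ on $\{V\le c\}$. Then $\{V\le c\}$ is invariant on the A2 event. This yields the $\Xset$/$\Uset$ claims and, with your union bound, the horizon statement, but not $x_{k+1}\in\PCIS_t$. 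The paper's sketch passes over this point in the same way (``$x_{k+1}$ stays in the sublevel set''), so it does not supply the missing step either.
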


\begin{proof}[Sketch of proof]
The constraint enforces
\begin{equation}
V(x_{k+1})-V(x_k)\le -\Vdec V(x_k) \end{equation} for all residuals consistent with the GP-UCB envelope and discretization error, hence $V$ is nonincreasing within $\PCIS_t$ and $x_{k+1}$ stays in the sublevel set (invariance). By Assumption~A2 the envelope holds with probability at least \(1-\delta_t\); a union bound across time
yields the horizon guarantee. \qedhere
\end{proof}

\begin{algorithm}[t]
\small
\caption{Online Safe Exploration with GP-PCIS}
\label{alg:gppcis}
\begin{algorithmic}[1]
\Require Nominal $(A,B)$, $u_{lin}(x)=-Kx$, $V(x)=x^\top P x$, constraints $\mathcal X,\mathcal U$
\Require GP prior $(m,k)$, risk schedule $\{\beta_t,\delta_t\}$, (optional) calibration $\gamma^\star\!\ge\!1$
\State $\mathcal D \gets \emptyset,\ t\gets0$
\While{experiment running}
  \State Measure/estimate state $x_t$
  \State $(\mu_t,\sigma_t)\gets \textsc{GP\_Posterior}(\mathcal D)$
  \State $\tilde\sigma_t(x)\gets \sqrt{\gamma^\star}\,\sigma_t(x)$ \Comment{if calibrated}
  \State \textbf{Safety margin at $x_t$:}
  \State $g_{\text{safety}}(u)\gets \nabla V(x_t)^\top\!\big(Ax_t+Bu+\mu_t(x_t)\big)$
  \State \hspace{10mm}$+\,\beta_t\,\|\nabla V(x_t)\|_2\,\tilde\sigma_t(x_t)+\lambda V(x_t)$
  \State \textbf{Safe excitation (CLF--QP):}
  \State $(u_t^\star,s_t^\star)\gets\arg\min_{u,\,s\ge0}\ \|u-u_{lin}(x_t)\|_2^2+\rho s$
  \State \hspace{18mm}\text{s.t. } $g_{\text{safety}}(u)\le s,\quad u\in\mathcal U$
  \State Apply $u_t^\star$; obtain $x_{t+1}$; \ \ $\mathcal D\gets \mathcal D\cup\{(x_t,u_t^\star,\text{obs})\}$
  \State (Optional) Update GP hyperparameters / $\gamma^\star$ every $K$ steps
  \State $t\gets t+1$
\EndWhile
\end{algorithmic}
\end{algorithm}

\section{Benchmarks and Experimental Setup}\label{sec:setup}

We tested the proposed GP–PCIS framework on two
benchmarks of increasing complexity. For each case we
state: (i) dynamics and operating point; (ii) constraints and signals;
(iii) initialization; and (iv) what is evaluated.

\subsection{Polynomial system (2D)}\label{subsec:poly}
Dynamics:
$\dot x_1 = x_2,\quad \dot x_2 = -2x_2 + x_2^2 + u$. \\
Linearisation at the origin:
$A=\begin{bmatrix}0&1\\0&-2\end{bmatrix}$,
$B=\begin{bmatrix}0\\1\end{bmatrix}$. \\
Nominal LQR: $Q=0.1I$, $R=0.1$ (gain $K$ as in Sec.~\ref{sec:method});
sampling $\Delta t=0.01\,\mathrm{s}$.
Constraints: state band on $x_2$ (velocity) and input bounds $u\in[u_{\min},u_{\max}]$.

\subsection{Three-tank water-level process (MIMO)}\label{subsec:tank}
States/inputs: $x=[h_1,h_2,h_3]^\top$, $u=[v_1,v_2,v_3]^\top$.
Truth model: Torricelli flows with inter-tank orifices; see (\ref{eq:h1})–(\ref{eq:h3}).
Nominal model: stabilizable linearisation $(A,B)$ about $(h^\star,v^\star)$; residual $g(\cdot)$ learned by a GP.
Sampling: $\Delta t=1\,\mathrm{s}$ (ZOH); discretisation via RK4 for simulation.
Constraints: level bands $h_i\in[h_i^{\min},h_i^{\max}]$, valve saturation.
Parameters: see Table~\ref{tab:plant_params}; full list in the appendix if needed.
This MIMO, coupled, constrained plant aims to test certification tractability and real-time control.

%
%
\section{Simulation Results}
\label{sec:results}

The effectiveness of combining GP regression with PCIS is evaluated in terms of its ability to enable informative data collection under safety constraints. For each benchmark, the following aspects are assessed:  
(i) GP model accuracy;  
(ii) construction and evolution of the PCIS;  
(iii) performance of an unsafe baseline; and  
(iv) the proposed GP--PCIS safe exploration scheme.

The evaluation considers four categories of metrics:  
\begin{enumerate}
    \item Learning performance: root-mean-square error (RMSE), coefficient of determination ($R^2$), and log-marginal likelihood (LL);
    \item Uncertainty quantification: empirical coverage of nominal $95\%$ confidence intervals and mean predictive standard deviation $\bar{\sigma}$;
    \item Safety: number of state-constraint violations, minimum distance to constraint boundaries, and certified PCIS volume $\mathrm{vol}(\mathcal{S})$;
    \item Computational efficiency: training and prediction times, and peak memory usage.
\end{enumerate}
Exploration behavior is further characterized by the posterior variance and the growth of the certified safe set.

\subsection{Polynomial Benchmark}
\label{sec:results:poly}
A low-dimensional polynomial system is employed to examine three fundamental aspects:  
(i) the impact of kernel misspecification;  
(ii) the calibration of predictive uncertainty; and  
(iii) the conservatism of the PCIS certificate under equal risk.  
This benchmark isolates these core behaviors to provide a controlled assessment of model misspecification, uncertainty quantification, and safety certification conservatism.

\subsection{GP Kernel Comparison}
\label{sec:kernel-comparison}
Identical training and test datasets are used for all kernel configurations. Visualizations are generated on the slice $x_1{=}0$, $x_2 \in [-5,\,5]$. Training and evaluation on the same dataset across kernels reveal consistent trends, as summarized in Table~\ref{tab:kernel-metrics}. The polynomial kernel of degree~2 closely matches the quadratic residual (oracle reference).  
Models from the RBF family exhibit comparable mean accuracy ($\mathrm{RMSE} \approx 0.227$, $R^2 \approx 0.912$) but display severe under-coverage ($\sim 34\%$) with small mean predictive standard deviation $\bar{\sigma}$. In contrast, the Matérn-$5/2$ kernel yields higher prediction error but improved coverage ($74.5\%$). Computation time and memory usage differ marginally across kernels (Table~\ref{tab:kernel-cost}). In practice, the RBF-ARD kernel provides the best overall accuracy among the tested models.

\begin{table}[t]
  \centering
  \caption{Performance of Gaussian Process kernels for residual modeling of the two-dimensional polynomial system. 
  ``Coverage'' denotes the empirical fraction of test points within the nominal $95\%$ confidence intervals, and $\bar{\sigma}$ denotes the mean predictive standard deviation. 
  Best results are shown in \textbf{bold}.}
  \label{tab:kernel-metrics}
  \setlength{\tabcolsep}{3pt}
  \begin{tabular}{lccccc}
    \toprule
    \textbf{Kernel} & \textbf{RMSE} & $\boldsymbol{R^2}$ & \textbf{Log-lik.} & \textbf{Coverage} & $\boldsymbol{\bar{\sigma}}$ \\
    \midrule
    RBF                 & 0.2275 & 0.9124 & 571.36 & 34.0\% & 0.0428 \\
    Matérn\,3/2         & 0.8433 & 0.5541 & 550.58 & 27.0\% & 0.0869 \\
    Matérn\,5/2         & 0.3754 & 0.7615 & 566.04 & 74.5\% & 0.1165 \\
    RBF+Matérn\,3/2     & 0.2275 & 0.9123 & 571.36 & 34.0\% & 0.0428 \\
    Periodic+RBF        & 0.2277 & 0.9122 & 571.36 & 34.0\% & 0.0428 \\
    Linear+RBF          & 0.2275 & 0.9124 & 571.36 & 34.0\% & 0.0428 \\
    RBF (ARD)           & 0.2274 & 0.9125 & 571.36 & 34.0\% & 0.0428 \\
    Polynomial (deg.\,2)& \textbf{0.0000} & \textbf{1.0000} & \textbf{580.76} & \textbf{100.0\%} & 0.0038 \\
    \bottomrule
  \end{tabular}
\end{table}

\begin{figure}[t]
  \centering
  \includegraphics[width=\linewidth]{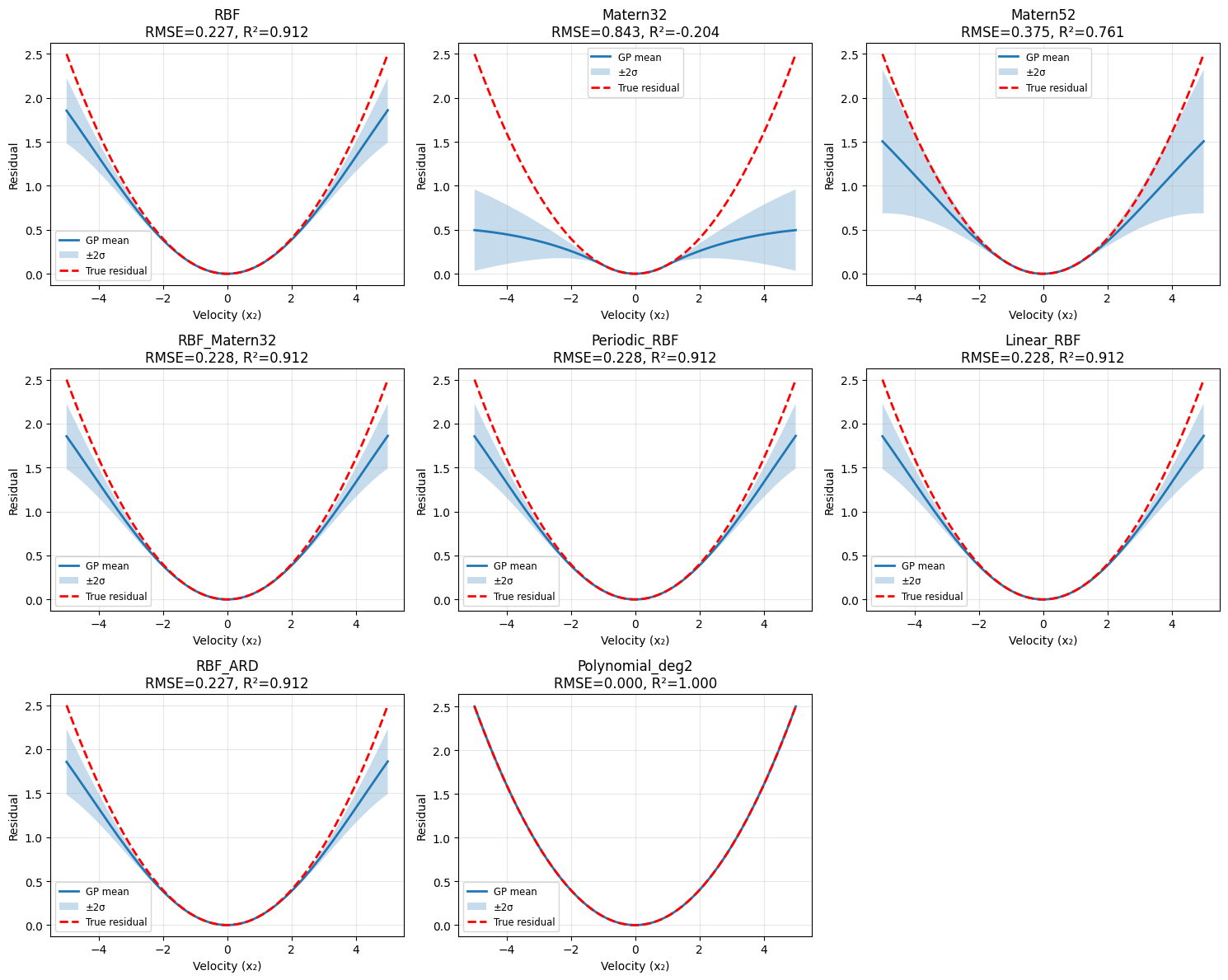}
  \caption{Kernel comparison of 2D Polynomial system along $x_1{=}0$. GP mean (blue) and $\pm 2\sigma$ band vs.\ true residual (red).}
  \label{fig:kernel-comparison}
\end{figure}

\begin{table}[t]
  \centering
  \caption{Polynomial benchmark computational cost and model complexity. Times is reported in seconds, and memory in MB.}
  \label{tab:kernel-cost}
  \begin{tabular}{lcccc}
    \toprule
    {Kernel} & Train & Predict & Total & Peak Mem. \\
    \midrule
    RBF                & 0.825 & 0.072 & 0.896 & 652 \\
    Matérn~3/2         & 0.877 & 0.057 & 0.934 & 657 \\
    Matérn~5/2         & 0.799 & 0.060 & 0.860 & 664 \\
    RBF+Matérn~3/2     & 1.002 & 0.087 & 1.090 & 675 \\
    Periodic+RBF       & 1.142 & 0.097 & 1.240 & 688 \\
    Linear+RBF         & 0.864 & 0.072 & 0.936 & 695 \\
    RBF–ARD            & 0.690 & 0.058 & 0.748 & 703 \\
    Polynomial (deg.\,2) & \textbf{0.634} & \textbf{0.054} & \textbf{0.688} & 705 \\
    \bottomrule
  \end{tabular}
\end{table}

\subsection{Full GP versus Sparse GP}
\label{subsec:poly_full_vs_sparse}

The polynomial benchmark is used to compare the exact GP model with the sparse FITC--SGPR approximation, using $M \in \{10, 20, 30, 40\}$ inducing points.

\begin{table}[t]
  \centering
  \caption{Comparison between exact GP and sparse FITC--SGPR models on the polynomial system. 
  ``Coverage'' denotes the empirical fraction of test points within the nominal $95\%$ confidence intervals.}
  \label{tab:gp-car}
  \setlength{\tabcolsep}{3pt}
  \begin{tabular}{lccccc}
    \toprule
    \textbf{Model} & \textbf{RMSE} & $\boldsymbol{R^2}$ & \textbf{Coverage (\%)} & \textbf{Time (s)} & $\boldsymbol{M}$ \\
    \midrule
    Full GP             & 0.4128 &  0.6719  & 90.0 & 0.27 & 80 \\
    Sparse GP ($M{=}10$) & 0.6671 &  0.1428  & 90.0 & 0.01 & 10 \\
    Sparse GP ($M{=}20$) & 3.2307 & -19.1014 &  5.0 & 0.01 & 20 \\
    Sparse GP ($M{=}30$) & 2.3529 &  -9.6625 & 15.0 & 0.01 & 30 \\
    Sparse GP ($M{=}40$) & 1.8136 &  -5.3349 & 10.0 & 0.01 & 40 \\
    \bottomrule
  \end{tabular}
\end{table}

The sparse FITC--SGPR models achieve approximately $30$--$40\times$ faster training times compared with the full GP. The configuration with $M{=}10$ offers the best trade-off between computational efficiency and predictive accuracy, although its performance remains below that of the full GP. In terms of safety, empirical coverage below $95\%$ indicates the need for variance calibration and/or an increased confidence parameter~$\beta$ prior to employing $\sigma$ within the PCIS framework; otherwise, the resulting certificates may become unreliable.

\subsection{Unsafe vs. safe exploration}
\label{subsec:poly_safe_vs_unsafe}
Without the safety layer, even stabilising LQR leaves $S(\alpha)$ (Fig.~\ref{fig:lqr_unsafe}). 
\begin{figure}[ht]
  \centering
  \includegraphics[width=0.8\linewidth]{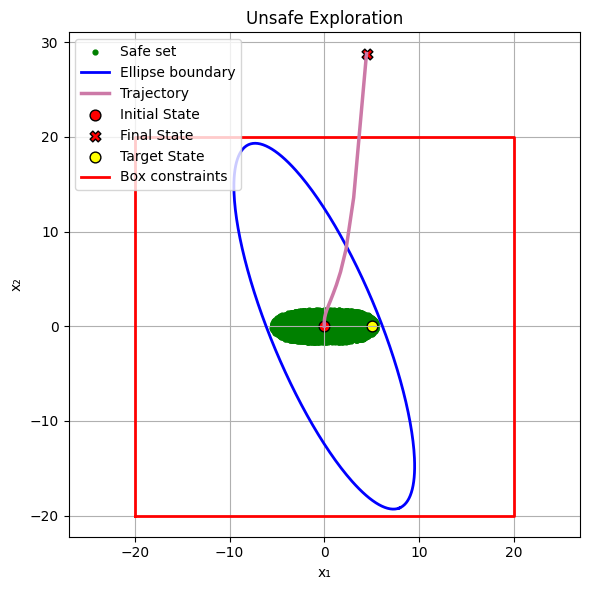}
  \caption{2D toy-\emph{unsafe} LQR tracking without safety layer: trajectory (red) exits the certified ellipsoid $S(\alpha)$ and box constraints reaching the target (yellow).}
  \label{fig:lqr_unsafe}
\end{figure}

\begin{figure*}[ht]
  \centering
  \captionsetup[subfigure]{justification=centering}
  \begin{subfigure}[t]{.24\textwidth}\centering
    \includegraphics[width=\linewidth]{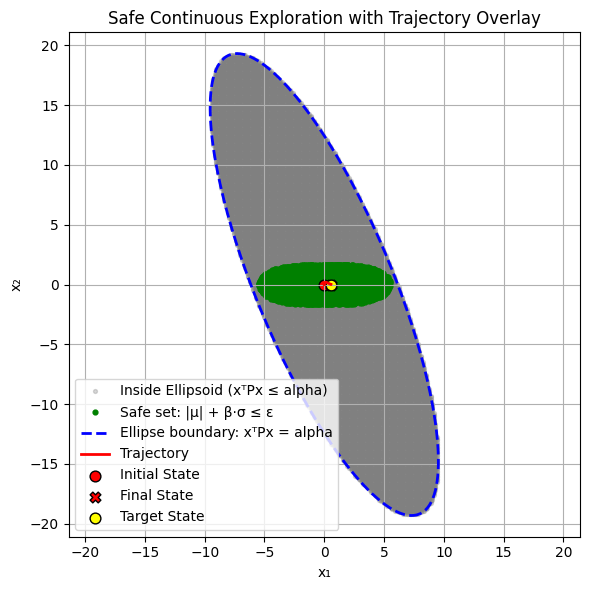}\subcaption{$t{=}0$}
  \end{subfigure}
  \begin{subfigure}[t]{.24\textwidth}\centering
     \includegraphics[width=\linewidth]{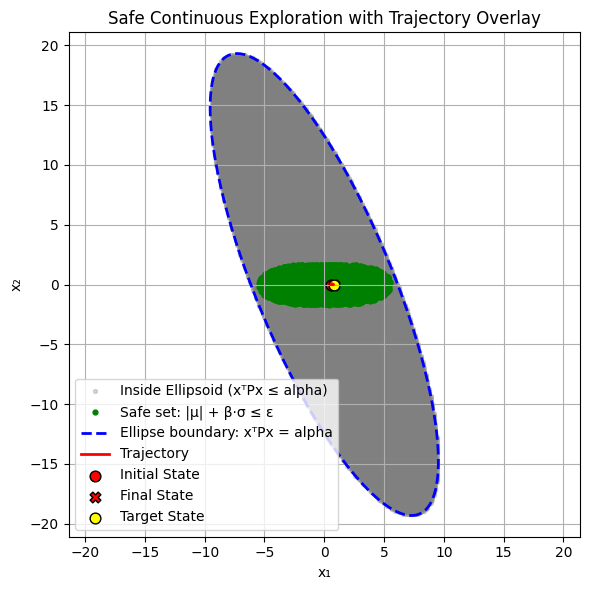}\subcaption{$t{=}1$}
   \end{subfigure}
  \begin{subfigure}[t]{.24\textwidth}\centering
    \includegraphics[width=\linewidth]{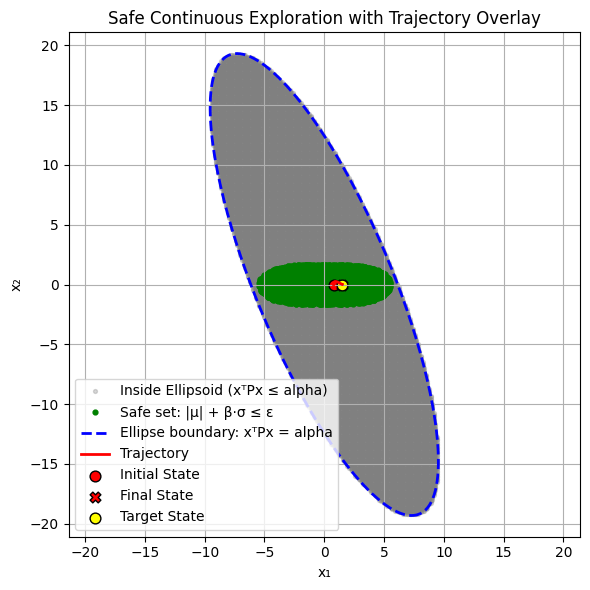}\subcaption{$t{=}2$}
  \end{subfigure}
  \begin{subfigure}[t]{.24\textwidth}\centering
    \includegraphics[width=\linewidth]{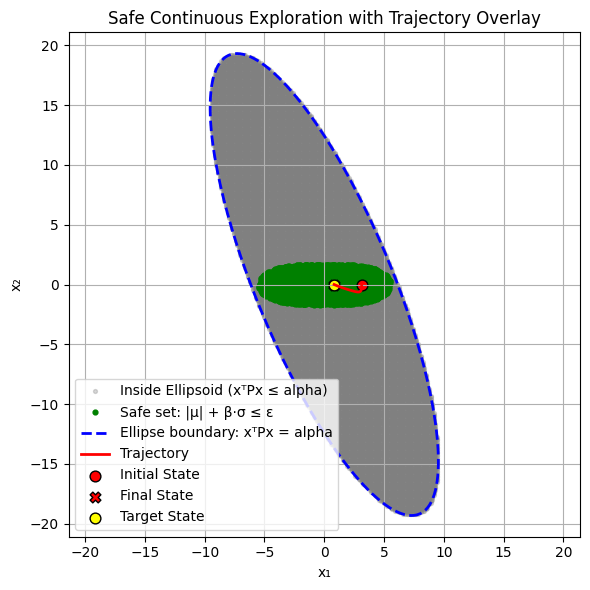}\subcaption{$t{=}12$}
  \end{subfigure}
  \caption{Safe GP-PCIS exploration on the 2D system: trajectory (red) remains inside PCIS (green ellipse).}
  \label{fig:safe-2d-sequence}
\end{figure*}

With our framework GP-PCIS, trajectories remain inside PCIS and the GP-safe region; as $\sigma(\cdot)$ shrinks, the certified admissible region expands (Tab.~\ref{tab:iter-metrics-2d}). 

\begin{table}[H]
  \centering
  \caption{Safe set size and GP performance per iteration (2D case).}
  \label{tab:iter-metrics-2d}
  \setlength{\tabcolsep}{2pt}
  \small
  \begin{tabular}{r r r r r r r r}
    \toprule
    \textbf{Iter} & \(|\mathcal{S}|\) & \textbf{RMSE} & \(\mathbf{MAE}\) & \(\mathbf{R^2}\) & \textbf{Coverage} & \(\bar{\sigma}\) & \textbf{Train pts} \\
    \midrule
     1 & 3993 & 1.1131 & 0.6782 & 0.8624 & 37.0\%  & 0.1772 &  100 \\
     2 & 3867 & 1.1131 & 0.6782 & 0.8624 & 37.0\%  & 0.1772 &  400 \\
     3 & 3900 & 1.1131 & 0.6782 & 0.8624 & 37.0\%  & 0.1772 &  700 \\
     4 & 4958 & 0.2325 & 0.1395 & 0.9940 & 100.0\% & 0.1172 & 1000 \\
     5 & 5092 & 0.2325 & 0.1395 & 0.9940 & 100.0\% & 0.1172 & 1300 \\
     6 & 5042 & 0.2325 & 0.1395 & 0.9940 & 100.0\% & 0.1172 & 1600 \\
     7 & 5095 & 0.0967 & 0.0578 & 0.9990 & 100.0\% & 0.0838 & 1900 \\
     8 & 5132 & 0.0967 & 0.0578 & 0.9990 & 100.0\% & 0.0838 & 2200 \\
     9 & 5055 & 0.0967 & 0.0578 & 0.9990 & 100.0\% & 0.0838 & 2500 \\
    10 & 5076 & 0.0293 & 0.0174 & 0.9999 & 100.0\% & 0.0520 & 2800 \\
    11 & 5217 & 0.0293 & 0.0174 & 0.9999 & 100.0\% & 0.0520 & 3100 \\
    12 & 5098 & 0.0293 & 0.0174 & 0.9999 & 100.0\% & 0.0520 & 3400 \\
    \bottomrule
  \end{tabular}
\end{table}

Table~\ref{tab:iter-metrics-2d} reports quantitative evolution per-iteration diagnostics for the 2D case across 12 iterations. 
Prediction accuracy improves markedly (RMSE $1.113\!\to\!0.029$, $\!-97\%$; MAE $0.678\!\to\!0.017$) while empirical coverage rises from $37\%$ to $100\%$, yielding a small calibration error $|\widehat{\mathrm{PICP}}_{0.95}-0.95|\approx 0.05$ after iteration 4. 
Mean predictive uncertainty $\bar\sigma$ drops from $0.177$ to $0.052$ (roughly $70\%$), which translates into a $95\%$ mean interval width (MPIW) shrinking from $\sim0.694$ to $\sim0.204$. 
The safe set expands overall by $+27.7\%$, with small oscillations as the controller tracks targets near the safe frontier. 
The acquisition balance, measured by $\rho_k=\beta\sigma_k/|\mu_k|$, peaks early (iteration 3, $\rho_3\approx 0.32$) indicating variance-driven choices, and then remains below $0.05$ (iterations 7–12), signalling a shift to exploitation as posterior variance contracts.

Table~\ref{tab:2d-derived-metrics} summarizes the additional quantitative indicators extracted at each iteration of the safe exploration loop. 
The calibration error $|\widehat{\mathrm{PICP}}_{0.95}-0.95|$ measures the deviation between the empirical coverage of $95\%$ predictive intervals and their nominal level, thus assessing the reliability of the GP uncertainty estimates. 
A value close to zero indicates that predicted variances are well calibrated. 
The mean prediction‐interval width (MPIW) is computed as $3.92\,\bar\sigma$, representing the average width of the GP $95\%$ confidence bands; it captures the sharpness of the posterior. 
The pair $(\mu_k,\sigma_k)$ reports the predicted mean and standard deviation at the exploration target selected at iteration~$k$, while the exploration ratio
\[
\rho_k=\tfrac{\beta\,\sigma_k}{|\mu_k|}
\]
(with $\beta=2.5373$) quantifies the balance between exploration (variance‐driven) and exploitation (mean‐driven) behavior in the Upper Confidence Bound (UCB) acquisition function. 
Finally, $\|x_k^\star-x_{k-1}\|$ denotes the Euclidean distance between successive exploration targets, providing a measure of how aggressively the algorithm moves across the state space.

\medskip
\noindent
Over the twelve iterations, the safe set expands by approximately $+27.7\%$, while MPIW decreases by about $70\%$, indicating a simultaneous increase in confidence and precision of the learned model. 
The calibration error drops from $0.58$ to $0.05$, confirming that the predictive variance becomes well aligned with empirical errors. 
The exploration ratio $\rho_k$ peaks at $0.32$ in the early stage (iteration~3), when uncertainty still dominates, and stabilizes below $0.05$ afterwards, reflecting a progressive shift toward exploitation as the GP converges. 
Together, these metrics demonstrate that the framework not only improves accuracy and calibration but also expands the safe operational domain without constraint violations.

\begin{table}[t]
  \centering
  \caption{Derived per-iteration metrics for the 2D case. 
  Calibration error is $|\widehat{\mathrm{PICP}}_{0.95}-0.95|$, 
  MPIW is the mean $95\%$ interval width ($3.92\,\bar\sigma$), and 
  $\rho_k=\beta\,\sigma_k/|\mu_k|$ with $\beta=2.5373$.}
  \label{tab:2d-derived-metrics}
  \setlength{\tabcolsep}{2pt}
  \small
  \begin{tabular}{r r r r r r r}
    \toprule
    \textbf{Iter} &  \textbf{Cal. Err.} & \textbf{MPIW} & $\mu_k$ & $\sigma_k$ & $\rho_k$ & $\|x^\star_k-x_{k-1}\|$ \\
    \midrule
     1  &  0.58 & 0.694 & 0.2757 & 0.0039 & 0.036 & 2.002 \\
     2  &  0.58 & 0.694 & 0.2774 & 0.0033 & 0.030 & 1.695 \\
     3  &  0.58 & 0.694 & 0.2161 & 0.0274 & 0.322 & 2.872 \\
     4  &  0.05 & 0.459 & 0.2791 & 0.0025 & 0.023 & 2.355 \\
     5  &  0.05 & 0.459 & 0.2807 & 0.0020 & 0.018 & 2.639 \\
     6  &  0.05 & 0.459 & 0.2785 & 0.0026 & 0.024 & 2.137 \\
     7  &  0.05 & 0.328 & 0.2817 & 0.0015 & 0.014 & 2.228 \\
     8  &  0.05 & 0.328 & 0.2815 & 0.0017 & 0.015 & 1.957 \\
     9  &  0.05 & 0.328 & 0.2748 & 0.0043 & 0.040 & 2.976 \\
    10  &  0.05 & 0.204 & 0.2817 & 0.0014 & 0.013 & 1.721 \\
    11  &  0.05 & 0.204 & 0.2739 & 0.0046 & 0.043 & 2.320 \\
    12  &  0.05 & 0.204 & 0.2779 & 0.0029 & 0.027 & 2.421 \\
    \bottomrule
  \end{tabular}
\end{table}

\begin{figure}[t]
  \centering
  \includegraphics[width=\columnwidth]{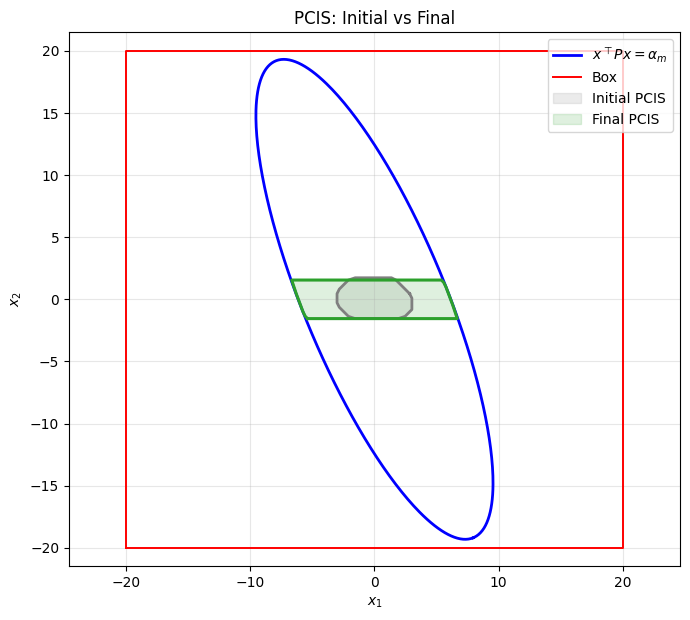}
  \caption{PCIS expansion in the 2D case. The blue ellipse shows the Lyapunov level set
  $x^\top P x=\alpha_m$; the red rectangle is the state box constraints. The shaded regions depict the
  initial (light gray) and final (green) certified PCIS.}
  \label{fig:pcis-initial-final}
\end{figure}

Figure~\ref{fig:pcis-initial-final} compares the initial and final certified probabilistic control–invariant sets (PCIS) in the 2D benchmark. As the GP posterior variance contracts (\(\bar\sigma: 0.177 \rightarrow 0.052\)), the certified safe set expands by about \(+27.7\%\) (Table~\ref{tab:iter-metrics-2d}). Throughout all 12 iterations, no state or input constraints were violated; every closed-loop trajectory remained inside both the state box and the certified PCIS.

These trends explain the observed PCIS enlargement: as uncertainty tightens, the Lyapunov-based safety margin reduces, certifying a larger admissible region without compromising the zero-violation record.

\subsection{Three Tank process}\label{sec:results:tank}
A realistic MIMO plant with water level constraints, valve saturations and cross-couplings stress tests certification tractability and real-time control.

For the interconnected three-tank process with MIMO actuation (three valves) we fixed level bands \([45\%,87\%]\) respectively LL low-low and HH high-high level as water level constraints in each tank. We operate around a stabilizable linearization and model unmatched nonlinearities with GP residuals. Sampling time \(T_s = 1\,\mathrm{s}\). We collect 200 samples initial data while the system was running around steady state for initial identification.
Exact GP vs.\ sparse FITC-SGPR with \(M\in\{10,20,30,40\}\) is tested to have a clear measurement of performance, and the same kernels as in Sec.~\ref{sec:kernel-comparison} are tested.
 We report RMSE, empirical coverage of nominal \(95\%\) predictive intervals, and training time. 

\subsection{GP modeling}\label{subsec:tank_gp}
From Tank 1, Tank 2, and Tank 3 residual channels results (Table~\ref{tab:wt-random-full}), it is possible to notice that Matérn kernels (\(\nu\in\{3/2,5/2\}\)) offer the best trade-off across channels, with coverage close to \(95\%\).

\begin{table}[ht]
\centering
\caption{Three–Tank kernel comparison. Best RMSE in \textbf{bold}.}
\label{tab:wt-random-full}
\setlength{\tabcolsep}{2pt}
\small
\begin{tabular}{lcccccc}
\toprule
& \multicolumn{2}{c}{\textbf{Tank 1}} & \multicolumn{2}{c}{\textbf{Tank 2}} & \multicolumn{2}{c}{\textbf{Tank 3}} \\
\cmidrule(lr){2-3}\cmidrule(lr){4-5}\cmidrule(lr){6-7}
Kernel & RMSE & Cov.\,(\%) & RMSE & Cov.\,(\%) & RMSE & Cov.\,(\%) \\
\midrule
RBF           & 0.52 & 99.6 & 0.77 & 54.3 & 0.43 & 97.1 \\
Matern32      & 0.41 & 99.9 & 0.40 & 99.9 & 0.25 & 100.0 \\
Matern52      & 0.44 & 99.5 & \bfseries 0.38 & 93.6 & \bfseries 0.21 & 100.0 \\
RBF+Matern32  & 0.44 & 98.4 & 0.56 & 86.9 & 0.45 & 95.6 \\
Periodic+RBF  & \bfseries 0.40 & 98.9 & 0.55 & 73.2 & 0.30 & 100.0 \\
Linear+RBF    & 0.65 & 95.1 & 1.04 & 86.9 & 0.91 & 64.4 \\
RBF–ARD       & 0.60 & 97.6 & 0.71 & 74.7 & 0.36 & 99.9 \\
Poly deg.\,2  & 2.95 & 94.8 & 0.86 & 81.5 & 0.62 & 88.1 \\
\bottomrule
\end{tabular}
\end{table}

Over-smooth RBF shows under-dispersion on Tank 2 (54.3\% coverage), which is problematic for safety certificates relying on GP uncertainty envelopes.

\subsection{Full GP versus Sparse GP}\label{subsec:tank_full_vs_sparse}
Exact GP  vs.\ FITC–SGPR ($M{\in}\{10,20,30,40\}$) aggregated over residual channels.
In aggregate, the exact GP achieves the lowest RMSE (0.8971) and perfect 100\% coverage, but with prohibitive latency (66.5\,s/query). Sparse FITC (SGPR) with \(M\in\{20,30,40\}\) delivers a near-nominal coverage of \(\approx 94.3\text{--}94.5\%\) and RMSE within \(\sim 10\%\) of the full GP (0.983–0.984), while enabling \(>10^3\times\) faster inference (e.g., \(66.5/0.026 \!\approx\! 2.6\times 10^3\)). Per-channel behavior is heterogeneous: for Tank 1, SGPR with \(M\ge 20\) \emph{improves} RMSE over the full GP (0.59 vs.\ 1.15) at equal coverage; for Tank 2, SGPR degrades both RMSE and coverage; for Tank 3, SGPR worsens RMSE (1.11 vs.\ 0.21) but maintains 100\% coverage for \(M\ge 20\). Under-dispersion at small \(M\) (e.g., \(M{=}10\), 82–93\% coverage) motivates a light variance calibration or a modest increase of the confidence scale \(\beta_t\) before PCIS. Overall, \(M\in[20,30]\) is a balanced operating point for online certification and control, preserving coverage close to the target, keeping RMSE acceptable, and reducing latency to the millisecond range. Given the mixed per-channel trends, a practical refinement is to allocate inducing points per residual channel and, if needed, apply heteroscedastic noise or channel-wise \(\beta_t\) to align empirical coverage with the nominal level used by PCIS.

\begin{table}[ht]
\centering
\caption{Three-tank full GP vs.\ sparse GP (aggregate over residuals). Coverage from nominal $95\%$ intervals.}
\label{tab:wt_sgpr_agg}
\setlength{\tabcolsep}{2pt}
\begin{tabular}{lccccr}
\toprule
Model & RMSE &  Cover.\ (\%) & Time (s) & $M$\\
\midrule
Full GP              & 0.8971  & 100.0 & 66.513 & 200\\
Sparse GP ($M{=}10$) & 1.2401  & 82.0  & 0.028  & 10\\
Sparse GP ($M{=}20$) & 0.9836  & 94.5  & 0.028  & 20\\
Sparse GP ($M{=}30$) & 0.9835  & 94.3  & 0.026  & 30\\
Sparse GP ($M{=}40$) & 0.9840  & 94.3  & 0.028  & 40\\
\bottomrule
\end{tabular}
\end{table}

Table~\ref{tab:wt_sgpr_perres} details Tank 1 – Tank 3. We therefore prioritise RMSE and coverage for model choice. SGPR with $M{=}20$–$30$ plus variance calibration (or slightly larger $\beta_t$) before PCIS.

\begin{table}[ht]
\centering
\caption{Three–Tank per-channel comparison (Tank 1–3).}
\label{tab:wt_sgpr_perres}
\setlength{\tabcolsep}{3pt}
\begin{tabular}{lcccr}
\toprule
\multicolumn{5}{c}{\textbf{Tank 1}}\\
\midrule
Model & RMSE & Cover.\ (\%) & Time (s) & $M$\\
\midrule
Full GP              & 1.1531 & 100.0 & 98.528 & 200\\
Sparse GP ($M{=}10$) & 1.0329 & 80.5  & 0.037  & 10\\
Sparse GP ($M{=}20$) & 0.5946 & 100.0 & 0.032  & 20\\
Sparse GP ($M{=}30$) & 0.5923 & 100.0 & 0.031  & 30\\
Sparse GP ($M{=}40$) & 0.5939 & 100.0 & 0.034  & 40\\
\midrule
\multicolumn{5}{c}{\textbf{Tank 2}}\\
\midrule
Model & RMSE & Cover.\ (\%) & Time (s) & $M$\\
\midrule
Full GP              & 1.3313 & 100.0 & 73.944 & 200\\
Sparse GP ($M{=}10$) & 1.4167 & 72.5  & 0.025  & 10\\
Sparse GP ($M{=}20$) & 1.2437 & 83.5  & 0.030  & 20\\
Sparse GP ($M{=}30$) & 1.2447 & 83.0  & 0.028  & 30\\
Sparse GP ($M{=}40$) & 1.2447 & 83.0  & 0.028  & 40\\
\midrule
\multicolumn{5}{c}{\textbf{Tank 3}}\\
\midrule
Model & RMSE & Cover.\ (\%) & Time (s) & $M$\\
\midrule
Full GP              & 0.2067 & 100.0 & 27.068 & 200\\
Sparse GP ($M{=}10$) & 1.2706 & 93.0  & 0.022  & 10\\
Sparse GP ($M{=}20$) & 1.1124 & 100.0 & 0.021  & 20\\
Sparse GP ($M{=}30$) & 1.1135 & 100.0 & 0.020  & 30\\
Sparse GP ($M{=}40$) & 1.1135 & 100.0 & 0.021  & 40\\
\bottomrule
\end{tabular}
\end{table}

\subsection{PCIS certification}\label{subsec:tank_pcis}
CLF decrease tightened by GP–UCB margin; risk level $\varepsilon{=}0.060$, confidence scale $\beta{=}2.797$ for the initial certificate.

\begin{figure}[ht]
  \centering
  \includegraphics[width=1.0\linewidth]{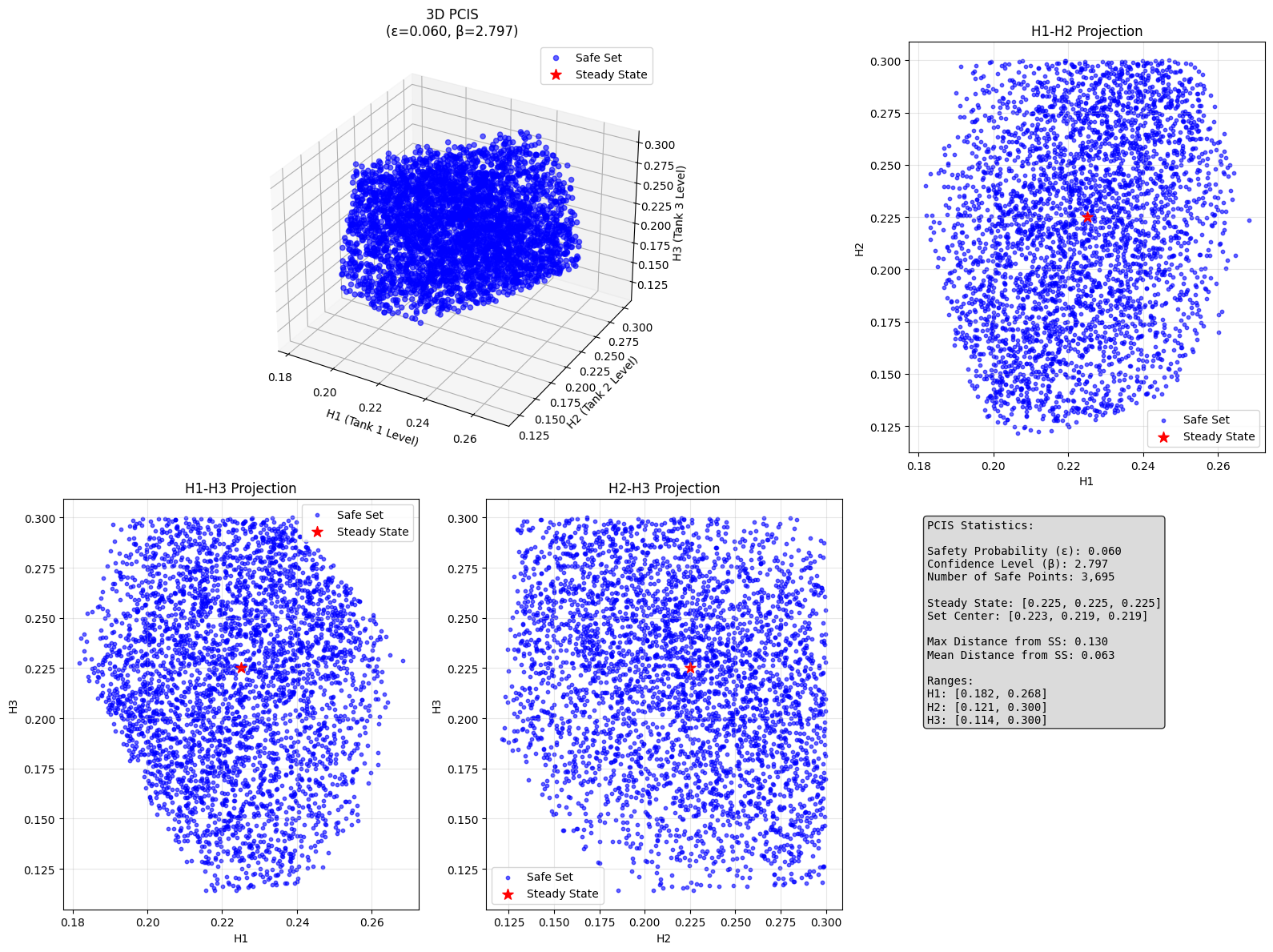}
  \caption{Three–Tank initial PCIS before safe exploration. Scatter projections and admissible axis ranges.}
  \label{fig:tank_PCIS}
\end{figure}

The initial certified ranges are
\[
H_1\in[0.182,\,0.268],\
H_2\in[0.121,\,0.300],\
H_3\in[0.114,\,0.300],
\]
with centre $\bar H\approx[0.223,\,0.219,\,0.219]$ and max distance $0.130$. The set is mildly anisotropic (wider along $H_2$–$H_3$ where GP uncertainty is lower). As data are added, variance contracts and the certified region expands.

\subsection{Unsafe vs. safe exploration}\label{subsec:tank_safe_vs_unsafe}
Without the safety layer, levels of water briefly leave the $[45\%,87\%]$ band: $H_3$ dips to $44.8\%$ at $t{\approx}131$s while the target is $45.2\%$ (Fig.~\ref{fig:tank_unsafe}) violating the LL constraints.

\begin{figure}[ht]
  \centering
  \includegraphics[width=1.0\linewidth]{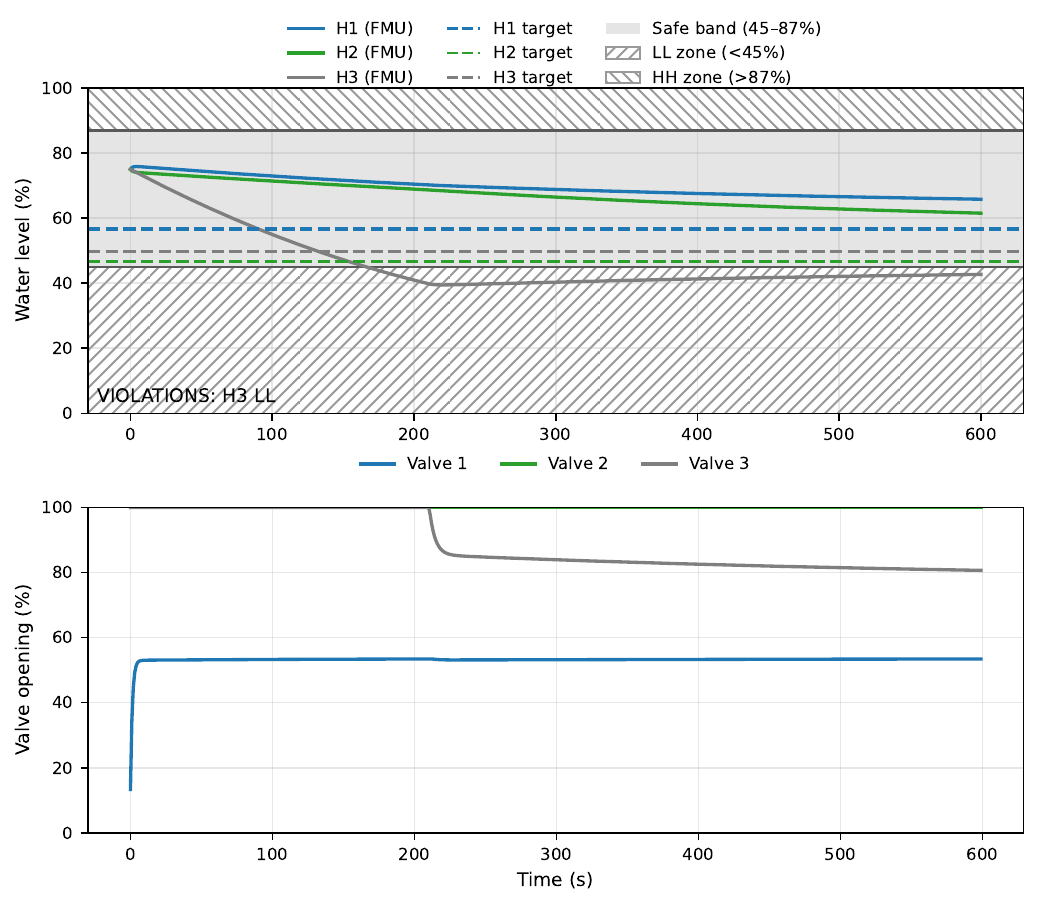}
  \caption{Three–Tank \emph{unsafe} exploration: top, level trajectories vs.\ band; middle, valve usage; bottom, event summary.}
  \label{fig:tank_unsafe}
\end{figure}

Across five iterations of our framework, \textbf{zero violations} are observed during an identification horizon $\approx\!2000$s (Fig.~\ref{fig:wt_progress_grid}).

\begin{figure*}[ht]
    \centering
    \includegraphics[width=\linewidth]{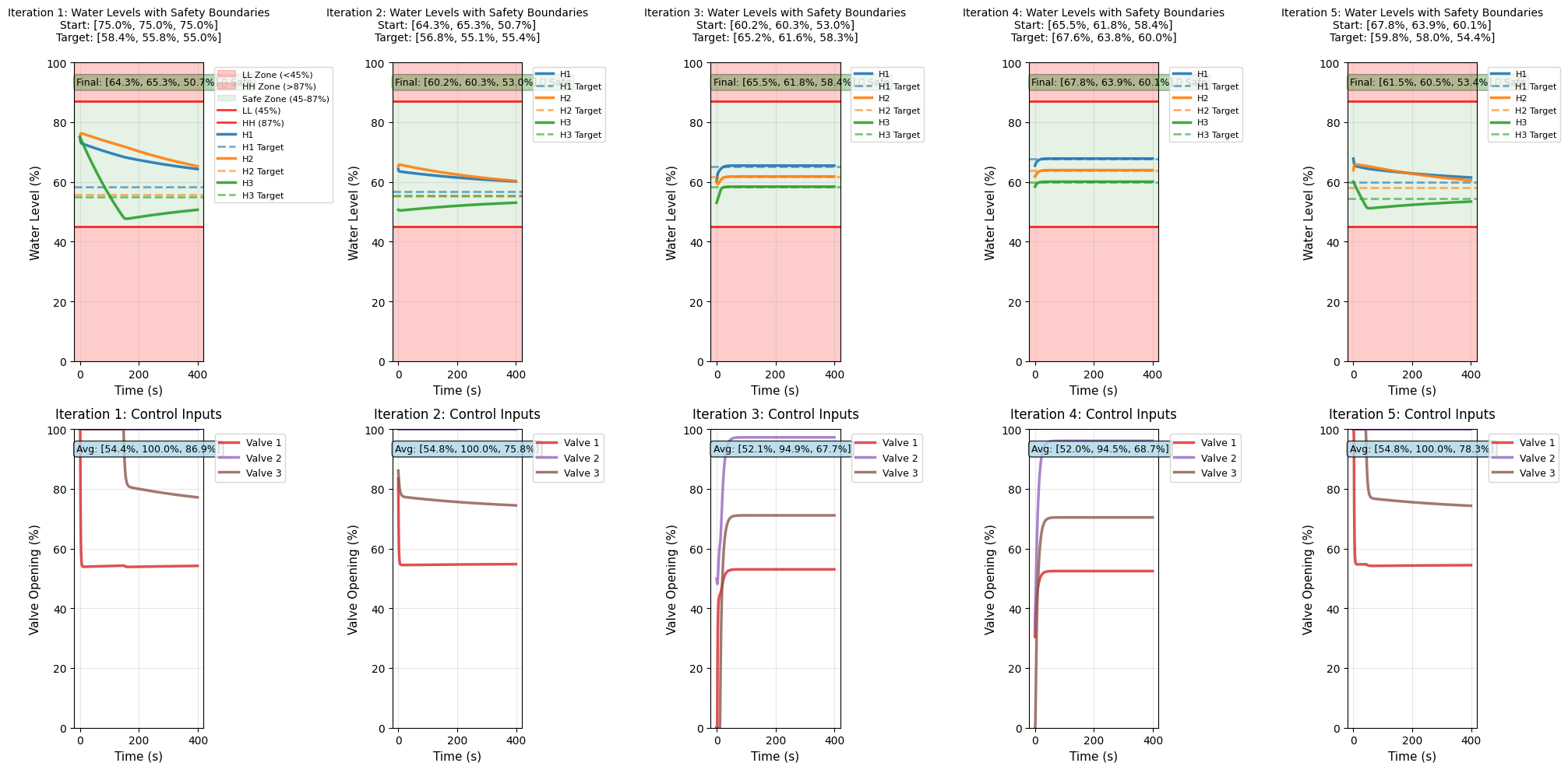}
    \caption{Three-Tank progressive safe exploration. Top: levels $H_{1:3}$ vs.\ bands; bottom: valve openings $\gamma_{1:3}$. Headers report start/target/final levels; bottom: average valve usage.}
    \label{fig:wt_progress_grid}
\end{figure*}

The main observation are the folowing:
\begin{enumerate}
    \item \textit{Safety.} All trajectories remain within $[45\%,87\%]$ throughout, without LL/HH violations.
    \item \textit{Convergence.} By iteration~3, steady plateaus match dashed references; iterations~4–5 retain small terminal errors while exploring nearby setpoints.
    \item \textit{Transients.} Limited overshoot and monotone settling ($\approx60$–$120$s for $H_1,H_2$). $H_3$ shows the largest initial correction in iter.~1, then smooth tracking from iter.~3 onward.
    \item \textit{Inputs.} Valve~2 saturates early to transfer mass to Tank~3, then relaxes as targets move closer and GP predictions improve; Valve~1 stabilises around mid-range; Valve~3 settles within $65$–$80\%$ depending on $H_3$ targets.
\end{enumerate}

\subsection{Discussion}
\label{sec:discussion}
The results demonstrate that combining GP residual modeling with PCIS enables informative exploration while maintaining strict adherence to safety constraints. On the low-dimensional polynomial system, the unsafe persistently exciting baseline achieves rapid learning but frequently violates state limits. In contrast, the GP--PCIS controller achieves comparable reduction in predictive uncertainty without any recorded violations, highlighting the efficacy of the safety certification in guiding exploration. On the three-tank process, control performance improves as the GP predictive variance decreases, allowing the certified safe set to expand and reducing the frequency of safety filter interventions. Median per-step QP solve times indicate that the proposed approach is feasible for real-time implementation.

\textbf{Limitations.} Despite these promising results, several limitations remain:  
(i) the PCIS certificate is conservative due to reliance on one-step worst-case bounds;  
(ii) exact GP training scales cubically with the dataset size, motivating the use of sparse or localized GP updates for high-dimensional or long-horizon tasks;  
(iii) the current guarantees are stepwise and do not explicitly account for cumulative risk over trajectories;  
(iv) safety critically depends on properly calibrated predictive uncertainty, as underestimated variance can yield overly optimistic safe sets.  

\textbf{Mitigation strategies.} These limitations can be addressed through several practical measures:  
conservative scheduling of the confidence parameter $\beta_t$, periodic empirical coverage checks to ensure accurate uncertainty quantification, variance calibration of the GP posterior, and simple tightening of reachability bounds to account for unmodeled effects. Together, these strategies improve robustness while preserving the benefits of safe exploration.

Overall, the results highlight that GP--PCIS provides a principled approach to balancing learning efficiency and safety, offering a flexible framework applicable to both low- and moderate-dimensional control tasks.

%
%
\section{Conclusion and Future Directions}
\label{sec:conclusion}

We have presented a practical framework for safe exploration of nonlinear systems, combining a nominal linear model with a GP residual embedded within a Lyapunov-based probabilistic control-invariant set, enforced through a compact control Lyapunov function quadratic program. Across both the polynomial benchmark and the three-tank process, the proposed approach enabled informative data collection without any observed safety violations, while progressively improving predictive accuracy at real-time computational cost.

\textbf{Future directions.} Several avenues exist to further improve the method. Conservatism could be reduced through multi-step chance constraints or reachability-based set tightenings. Scalability may be improved by employing incremental, sparse, or localized GP updates for high-dimensional or long-horizon tasks. Hardware validation, as well as integration with performance-oriented MPC, represent natural next steps toward real-world deployment.

\appendix

\section{Plant: Three--Tank Water--Level Process}\label{sec:plant}

\subsection{Decision and topology}\label{subsec:plant_topology}
We adopt a three-tank, cross–coupled benchmark. Tanks \(i\in\{1,2,3\}\) have constant cross-section areas \(A_i>0\) and liquid heights \(h_i(t)\) (states). Each tank drains to a reservoir through a controllable outlet valve with opening \(v_i\in[0,1]\) (inputs). Tanks are hydraulically coupled by lateral orifices between \((1,2)\) and \((2,3)\). A pump injects an exogenous inflow into tank~2, acting as a disturbance.

\paragraph*{State and input vectors.}
\[
x \;=\; \begin{bmatrix} h_1 & h_2 & h_3 \end{bmatrix}^\top,\qquad
u \;=\; \begin{bmatrix} v_1 & v_2 & v_3 \end{bmatrix}^\top.
\]

\subsection{Nonlinear mass–balance model}\label{subsec:nonlinear_model}
Let \(g\) be gravitational acceleration and define the following orifice flow laws (Torricelli):
\begin{align}
q_{i\downarrow}(h_i,v_i) &= v_i\, c_{i\downarrow}\,\sqrt{\max\{h_i,0\}}, \qquad c_{i\downarrow} := C_{d,i}\,a_{o,i}\sqrt{2g}, \label{eq:outflow}\\
q_{ij}(h_i,h_j) &= c_{ij}\,\sqrt{\max\{h_i-h_j,\,0\}}, \qquad c_{ij} := C_{d,ij}\,a_{ij}\sqrt{2g}, \label{eq:interflow}
\end{align}

with discharge coefficients \(C_{d,\cdot}\) and effective areas \(a_{o,i},a_{ij}\).
Let \(q_p(t)\) be the pump inflow to tank~2 (disturbance). The continuous–time dynamics are

\begin{align}
\dot h_1 &= \frac{1}{A_1}\big( q_{21}(h_2,h_1) - q_{12}(h_1,h_2)
            - q_{1\downarrow}(h_1,v_1) \big), \label{eq:h1}\\[2pt]
\dot h_2 &= \frac{1}{A_2}\big( q_p(t) + q_{12}(h_1,h_2) - q_{21}(h_2,h_1) \notag\\
        &\qquad\quad {}+ q_{32}(h_3,h_2) - q_{23}(h_2,h_3)
            - q_{2\downarrow}(h_2,v_2) \big), \label{eq:h2}\\[2pt]
\dot h_3 &= \frac{1}{A_3}\big( q_{23}(h_2,h_3) - q_{32}(h_3,h_2)
            - q_{3\downarrow}(h_3,v_3) \big). \label{eq:h3}
\end{align}

Equations \eqref{eq:outflow}–\eqref{eq:h3} define the truth plant. The nominal linear model used for control is a stabilisable linearisation of \eqref{eq:h1}–\eqref{eq:h3} about an operating point \((h^\star,v^\star)\); unmodelled effects (e.g., \(C_d\) drift, unmeasured leaks) are captured by the GP residual.

\subsection{Sampling and discretisation}\label{subsec:sampling}
The controller runs with zero–order hold and sampling time \(\Delta t={0.5}{s}\) (set to {1.0}{s} if needed for your hardware). Discrete dynamics for simulation are obtained via RK4 on \eqref{eq:h1}–\eqref{eq:h3}. For design, we use the discrete nominal pair \((A_d,B_d)\) from linearising and discretising the physics at \((h^\star,v^\star)\) with ZOH.

\subsection{Inputs/outputs and sensors}\label{subsec:io}
Inputs are valve openings \(v_i\in[0,1]\) (dimensionless).  
Measured outputs are water levels  heights \(y=h+\nu\), where \(\nu_k\sim\mathcal N(0,R)\) models sensor noise. Three sensors are available, one per each tank \(y=[h_1,h_2,h_3]^\top\).

\subsection{Operational constraints}\label{subsec:constraints}
Let safety bands for levels be \([h_i^{\min},h_i^{\max}]\) with \(0<h_i^{\min}<h_i^{\max}\).
\begin{align}
\mathcal X &:= \big\{ h\in\mathbb R^3 \ \big|\ h_i^{\min}\le h_i \le h_i^{\max},\ i=1,2,3 \big\}, \\
\mathcal U &:= \big\{ v\in\mathbb R^3 \ \big|\ 0 \le v_i \le 1,\ i=1,2,3 \big\}.
\end{align}
Chance constraints used in the controller read
\begin{equation}
\Pr\!\big(h_k \in \mathcal X\big) \ge 1-\delta,\qquad \forall k\ge 0,
\end{equation}
with \(\delta\in(0,1)\) allocated as described in Section~\ref{subsec:guarantee}. Use the same \([h_i^{\min},h_i^{\max}]\) in text, figures and captions.

\subsection{Disturbance and noise models}\label{subsec:disturbance}
The pump inflow is modelled as
\begin{equation}
\begin{aligned}
q_p(k) &= \bar q_p\,(1+\epsilon_k) + d_k, 
\quad &&\epsilon_k \sim \mathcal N(0,\sigma_\epsilon^2),\\
d_{k+1} &= \rho_d\, d_k + \omega_k, 
\quad &&\omega_k \sim \mathcal N(0,\sigma_d^2).
\end{aligned}
\label{eq:qp_disturbance}
\end{equation}

i.e., a biased, slowly varying AR(1) disturbance. Sensor noise \(\nu_k\sim\mathcal N(0,R)\) with \(R=\mathrm{diag}(\sigma_{h_1}^2,\sigma_{h_2}^2,\sigma_{h_3}^2)\). Any unmodeled hydraulic effects are absorbed by the GP residual in the controller.

\subsection{Initial conditions}\label{subsec:initial}
Unless stated otherwise, experiments start from
\begin{equation}
\begin{aligned}
h(0) &= \begin{bmatrix} h_1^0 & h_2^0 & h_3^0 \end{bmatrix}^\top,\\
h_i^0 &\in \big[\, h_i^{\min} + \alpha_i \big(h_i^{\max}-h_i^{\min}\big) \,\big],\quad
\alpha_i \in (0,1).
\end{aligned}
\label{eq:init_levels}
\end{equation}
We used \(h^0=\)[{0.22}{m}, {0.22}{m}, {0.22}{m}] in our experiments.

\begin{table}[t]
\centering
\caption{Three-tank plant parameters (symbols, meaning, units, values).}
\label{tab:plant_params}
\resizebox{\columnwidth}{!}{%
\begin{tabular}{llll}
\toprule
Symbol & Meaning & Units & Value \\
\midrule
$A_1,A_2,A_3$      & Tank cross--section areas     & m$^2$          & $0.015,\;0.015,\;0.015$ \\
$a_{o,1:3}$        & Outlet orifice areas          & m$^2$          & $5.0\times10^{-5}$ (each) \\
$a_{12},a_{23}$    & Inter--tank orifice areas     & m$^2$          & $3.0\times10^{-5},\;3.0\times10^{-5}$ \\
$C_{d,1:3}$        & Outlet discharge coeffs.      & --             & $0.62$ (each) \\
$C_{d,12},C_{d,23}$& Inter--tank discharge coeffs. & --             & $0.62,\;0.62$ \\
$c_{i\downarrow}$  & $C_{d,i}a_{o,i}\sqrt{2g}$     & m$^{5/2}$/s    & $1.373\times10^{-4}$ (each) \\
$c_{12},c_{23}$    & $C_{d,ij}a_{ij}\sqrt{2g}$     & m$^{5/2}$/s    & $8.24\times10^{-5},\;8.24\times10^{-5}$ \\
$h_i^{\min},h_i^{\max}$ & Safety bands             & m              & $0.12,\;0.30$ (each) \\
$r_{\max}$         & Max valve rate                & s$^{-1}$       & $1.0$ \\
$\Delta t$         & Sampling time                 & s              & $1$ \\
$\bar q_p$         & Mean pump inflow              & m$^3$/s        & $1.5\times10^{-5}$ \\
\bottomrule
\end{tabular}}
\end{table}

%
%
\bibliographystyle{plain}
\bibliography{bibliografia}

\end{document}

%% file: header.tex
\newcommand{\1}{\mathbbm{1}}

\makeatletter
\newsavebox\myboxA
\newsavebox\myboxB
\newlength\mylenA

\usepackage{tcolorbox}
\usepackage{pifont}
\definecolor{mydarkgreen}{RGB}{39,130,67}

\definecolor{mydarkred}{RGB}{192,47,25}

\newcommand*\overbar[2][0.75]{%
    \sbox{\myboxA}{$\m@th#2$}%
    \setbox\myboxB\null
    \ht\myboxB=\ht\myboxA%
    \dp\myboxB=\dp\myboxA%
    \wd\myboxB=#1\wd\myboxA
    \sbox\myboxB{$\m@th\overline{\copy\myboxB}$}
    \setlength\mylenA{\the\wd\myboxA}
    \addtolength\mylenA{-\the\wd\myboxB}%
    \ifdim\wd\myboxB<\wd\myboxA%
       \rlap{\hskip 0.5\mylenA\usebox\myboxB}{\usebox\myboxA}%
    \else
        \hskip -0.5\mylenA\rlap{\usebox\myboxA}{\hskip 0.5\mylenA\usebox\myboxB}%
    \fi}
\makeatother

\usepackage[colorinlistoftodos,bordercolor=orange,backgroundcolor=orange!20,linecolor=orange,textsize=scriptsize]{todonotes}

\newcommand{\rob}[1]{}

\usepackage{mdframed} 
\usepackage{thmtools}

\definecolor{shadecolor}{gray}{0.90}
\declaretheoremstyle[
headfont=\normalfont\bfseries,
notefont=\mdseries, notebraces={(}{)},
bodyfont=\normalfont,
postheadspace=0.5em,
spaceabove=1pt,
mdframed={
  skipabove=8pt,
  skipbelow=8pt,
  hidealllines=true,
  backgroundcolor={shadecolor},
  innerleftmargin=4pt,
  innerrightmargin=4pt}
]{shaded}

\declaretheorem[style=shaded,within=section]{definition}
\declaretheorem[style=shaded,sibling=definition]{theorem}
